\documentclass[12pt]{article}
\usepackage{epsfig, graphicx,} 
\usepackage{amssymb, amsthm}
\usepackage{amsmath}
\usepackage{relsize}

\usepackage{calrsfs}
\DeclareMathAlphabet{\pazocal}{OMS}{zplm}{m}{n}

\newcommand{\lln}{\lambda_+}
\newcommand{\lrn}{\lambda_-}
\newcommand{\lr}{\pm}

\usepackage{calrsfs}  

\usepackage{booktabs}
\usepackage{enumitem} 	
\usepackage{eurosym}	
\usepackage{wrapfig}	
\usepackage{float}		
\usepackage{subfig}		
\usepackage{bbm}		
\usepackage{footnote}
\usepackage{thmtools}
\usepackage[title]{appendix}
\usepackage{xcolor}
\usepackage{url}

\usepackage{hyperref}
\usepackage[authoryear]{natbib} 
\bibpunct{(}{)}{,}{a}{}{,}

\topmargin=-3truecm     
\textheight=25.5truecm    
\oddsidemargin -0.4truecm  
\textwidth=17.5truecm     
\newtheorem{theorem}{Theorem}

\newtheorem{lemma}[theorem]{Lemma}

\theoremstyle{definition}
\newtheorem{definition}[theorem]{Definition}
\theoremstyle{remark}

\numberwithin{theorem}{section}
\numberwithin{proposition}{section}
\numberwithin{lemma}{section}
\numberwithin{corollary}{section}
\numberwithin{definition}{section}
\numberwithin{remark}{section}
\numberwithin{example}{section}

\newcommand{\be}{\begin{equation}}
\newcommand{\en}{\end{equation}}
\newcommand{\ben}{\begin{equation*}}
\newcommand{\enn}{\end{equation*}}
\newcommand{\bea}{\begin{eqnarray}}
\newcommand{\ena}{\end{eqnarray}}


\begin{document}
	\newlength\tindent
	\setlength{\tindent}{\parindent}
	\setlength{\parindent}{0pt}
	\renewcommand{\indent}{\hspace*{\tindent}}
	
	\begin{savenotes}
		\title{
			\bf{ Smile asymptotics for \\
            Bachelier implied volatility
		}}
		\author{
			Roberto Baviera$^\ddagger$  \&
			Michele Domenico Massaria$^\ddagger$ 
		}
		
		\maketitle
		
		\begin{tabular}{ll}
			$(\ddagger)$ &  Politecnico di Milano, Department of Mathematics, 32 p.zza L. da Vinci, Milano \\
		\end{tabular}
	\end{savenotes}
	
	\vspace{0.5cm}
	
	\begin{abstract}	
        \noindent
We investigate the asymptotic behaviour of the implied volatility in the Bachelier setting, extending the large-strike results established for the Black-Scholes framework. Exploiting the theory of regular variation, we derive explicit expressions for the Bachelier implied volatility in the wings of the smile, directly linking them to the tail decay of the underlying returns' distribution.
Furthermore, we establish a rigorous connection between the analyticity strip of the characteristic function and the asymptotic slope of the volatility smile. Moreover, we show that if the implied variance grows linearly for large absolute moneyness degree, the underlying returns' distribution must exhibit exponential tail decay, and the corresponding characteristic function is analytic in a horizontal strip of the complex plane. These findings characterizes valid models based solely on the observable asymptotic behaviour of the smile.
         \end{abstract}
	
	\vspace*{0.11truein}
	{\bf Keywords}: 
	volatility surface, Bachelier model, Lee's moment formula.
	\vspace*{0.11truein}
	
	
	\vspace{1.5cm}
	\begin{flushleft}
		{\bf Address for correspondence:}\\
		Roberto Baviera\\
		Department of Mathematics \\
		Politecnico di Milano\\
		32 p.zza Leonardo da Vinci \\ 
		I-20133 Milano, Italy \\
		Tel. +39-02-2399 4575\\
		roberto.baviera@polimi.it
	\end{flushleft}

\newpage
\begin{center}
\Large\bfseries 
			\bf{ Smile asymptotics for Bachelier implied volatility
		}
\end{center}
\section{Introduction}
In recent years, the Bachelier model \citep{bachelier1900jeu,davis2006} has regained relevance in financial modelling, particularly in markets where the underlying can take negative values, such as interest rates and certain commodity futures \citep{CMEGroup,ICE}. Given this renewed interest, a refined understanding of the implied volatility within the Bachelier setting has become increasingly relevant.

\smallskip

In the context of implied volatility asymptotics, the work of \cite{benaim2009regular} has significantly advanced the theoretical understanding of smile behaviour in the Black–Scholes framework. By leveraging the theory of regular variation \citep{bingham1987regular}, they established a rigorous connection between the tail behaviour of the underlying's distribution and the wings of the implied volatility surface. Their framework provided a powerful extension of \cite{lee2004moment} moment formula, yielding precise asymptotics under mild integrability conditions that are, in fact, satisfied by most models of practical interest. A detailed description of this problem can be found in excellent textbooks \citep[see, e.g.,][Ch.7]{gatheral2011volatility}.

\smallskip

Three are the main contributions of the paper.

\smallskip

The first contribution of this paper is to extend the result of \cite{benaim2009regular} to the Bachelier setting. In particular, we investigate the asymptotic behaviour of the Bachelier implied volatility for large moneyness degree. We show that, under appropriate assumptions on the distribution of underlying's returns, the wings of the Bachelier implied volatility can be characterised through explicit formulas.

\smallskip

Another contribution of this work is to establish a direct connection between the asymptotic behaviour of the implied volatility smile and the analyticity properties of the characteristic function of returns. In particular, following \cite{benaim2008smile}, that related the behaviour of the tails of a probability distribution to the corresponding moment generating function, we show that the boundaries of the analyticity strip in the complex plane determine the exponential decay rate of returns' distribution tails, which are related to the behaviour of Bachelier implied volatility for large absolute values of moneyness degree.

\smallskip

The third contribution is to show that the asymptotic behaviour of the implied volatility determines the behaviour of the tails of the returns' probability distribution. In particular, we prove that if the implied variance grows linearly in the absolute value of the moneyness degree, for large moneyness degree, then the returns' probability distribution tails exhibit exponential decay, and the corresponding characteristic function is analytical in a horizontal strip in the complex plane.

\smallskip

The rest of the paper is organised as follows. 
In Section \ref{sec:Bachelier_regVar}, we recall the Bachelier formula for European options and the theory of regular varying functions.
In Section \ref{sec:smiles}, we obtain the asymptotic behaviour of the Bachelier implied volatility for large strikes, and we show that if the implied variance is asymptotic to the moneyness degree, then the returns' characteristic function is analytical in a horizontal strip of the complex plane.
Finally, we state our conclusions in Section \ref{sec:Conclusions}.

\section{Bachelier formula and regular varying functions}
\label{sec:Bachelier_regVar}
Let us consider a call option with moneyness degree $\kappa$.\footnote{In this framework, if $F_0$ and $K$ are, respectively, the value of the underling at time $0$ and option's strike, and $t>0$ is option's time-to-maturity, the moneyness degree $\kappa$ is defined as $\frac{K-F_0}{\sqrt{t}}$.} If we indicate with $\zeta$ the risk-neutral returns,\footnote{In this framework, if $F_0$ and $F_t$ are, respectively, the value of the underlying at time $0$ and $t$, the return $\zeta$ is defined as $\frac{F_t-F_0}{\sqrt{t}}$.} $F(\bullet)$ the cdf of $\zeta$
and $\bar{F}(\bullet):=1-F(\bullet)$, then its undiscounted call price can be written as
\begin{equation}
\label{eq:call_integral}
    c(\kappa):=\mathbb{E}\left[\zeta-\kappa\right]^+=\int_\kappa^{+\infty}(\zeta-\kappa)\mathrm{d}F(\zeta)\,\,.
\end{equation}

The normalized Bachelier call price $c_b$ corresponds to the case with $\zeta$ distributed as a zero mean Gaussian rv with variance $\sigma^2$ \citep[see, e.g.,][eq.(3)]{choi2022black}
\begin{equation}
    \label{eq:norm_Bachelier}
    c_b(\kappa,\sigma):=-\kappa\,\Phi\left(-\frac{\kappa}{\sigma}\right)+\sigma\,\varphi\left(-\frac{\kappa}{\sigma}\right)\,\,,
\end{equation}
with $\varphi(\bullet)$ and $\Phi(\bullet)$, respectively, the pdf and the cdf of a standard normal rv.\\
The implied volatility is the (unique) value \footnote{The existence of a unique implied volatility for the Bachelier formula is a consequence of the continuity and monotonicity in $\sigma>0$ of the function $\sigma\mapsto c_b(\kappa,\sigma):\mathbb{R}^+\setminus\{0\}\to\mathbb{R}^+\setminus\{0\}$.} $\mathrm{I}(\kappa)$ such that
\begin{equation}
\label{eq:IV_eq}
c_b\left(\kappa,\mathrm{I}(\kappa)\right)=c(\kappa)\,\,.
\end{equation}
This setting is an extension to real valued underling of \cite{benaim2009regular}, who consider the relation for the Black-Scholes formula.\\ 
Similarly to \eqref{eq:norm_Bachelier}, the normalized Bachelier put price is
\begin{equation}
    p_b(\kappa,\sigma):=\kappa\left(1-\,\Phi\left(-\frac{\kappa}{\sigma}\right)\right)+\sigma\varphi\left(-\frac{\kappa}{\sigma}\right)\,\,,
\end{equation}
that can be written as in \eqref{eq:call_integral} and \eqref{eq:IV_eq} as follows
\begin{equation}
\label{eq:put_integral_cdf}
    p_b\left(\kappa,\mathrm{I}(\kappa)\right)=p(\kappa):=\int_{-\infty}^\kappa(\kappa-\zeta)\mathrm{d}F(\zeta)\,\,.
\end{equation}
The following Lemma identifies upper and lower bounds for the normalised Bachelier call formula.
\begin{lemma} {\rm (Bachelier formula bounds)} 
\label{lem:BoundsBachelier}\\  
For any $\beta>0$, $y>0$
\[
e^{-\frac{1}{2 \beta} y} g_l(y) \le c_b\left(y, \sqrt{\beta y}\right) \le e^{-\frac{1}{2 \beta} y} g_u(y)\,\,,
\]
where $g_l(y)$ and $g_u(y)$ are positive and go to $+\infty$ as $y\to+\infty$.
\end{lemma}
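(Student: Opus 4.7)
The plan is to substitute the parameterization $\sigma = \sqrt{\beta y}$ into the closed form \eqref{eq:norm_Bachelier}, isolate the Gaussian factor $e^{-y/(2\beta)}$ coming from $\varphi(\kappa/\sigma)$, and then control the residual pre-factor by means of classical Mills-ratio estimates on $\Phi(-\,\cdot\,)$. Concretely, setting $u := \sqrt{y/\beta}$ so that $\kappa/\sigma = u$, $u^{2}/2 = y/(2\beta)$ and $\sqrt{\beta y}=\beta u$, the Bachelier formula rewrites as
\[
c_b\bigl(y,\sqrt{\beta y}\bigr) \;=\; \beta u\bigl[\varphi(u) - u\,\Phi(-u)\bigr],
\]
and, since $\varphi(u) = (2\pi)^{-1/2}\,e^{-y/(2\beta)}$, the exponential factor is already explicit; the problem is thus reduced to bounding the bracket $\varphi(u)-u\,\Phi(-u)$ in terms of $u$, equivalently in terms of $y$.

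The workhorse is the two-sided Mills-ratio sandwich, which I would apply in its standard form
\[
\frac{\varphi(u)}{u}\Bigl(1-\tfrac{1}{u^{2}}\Bigr) \;\le\; \Phi(-u) \;\le\; \frac{\varphi(u)}{u},\qquad u>1,
\]
and, to obtain a non-trivial \emph{lower} bound on $c_b$, supplemented by the next-order upper inequality $\Phi(-u)\le (\varphi(u)/u)\bigl(1 - 1/u^{2} + 3/u^{4}\bigr)$ valid for $u>\sqrt{3}$. Plugging these estimates in yields the chain
\[
\varphi(u)\Bigl(\tfrac{1}{u^{2}} - \tfrac{3}{u^{4}}\Bigr) \;\le\; \varphi(u) - u\,\Phi(-u) \;\le\; \frac{\varphi(u)}{u^{2}},
\]
from which, multiplying by $\beta u$ and reverting to the variable $y$ via $u=\sqrt{y/\beta}$, one reads off explicit candidates for $g_l$ and $g_u$ valid once $y$ exceeds an explicit threshold $y_{0}(\beta)$.

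To upgrade the estimate to all $y>0$, I would exploit that the map $y\mapsto e^{y/(2\beta)}c_b(y,\sqrt{\beta y})$ is continuous and strictly positive on $(0,y_{0}(\beta)]$, so that its infimum and supremum on this compact interval are finite and positive, and can be used to adjust $g_l,g_u$ on the bounded regime before stitching with the tail estimates above. The behaviour of the resulting $g_l$ and $g_u$ at infinity then follows directly from their closed-form expressions.

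The main obstacle is producing the lower bound of the stated form. The leading-order terms in the bracket $\varphi(u) - u\,\Phi(-u)$ cancel, so the naive one-term Mills inequality only delivers the trivial $c_b \ge 0$; it is the second-order Mills expansion that uncovers the surviving $\varphi(u)/u^{2}$ contribution. Verifying that the resulting $g_l$ remains strictly positive uniformly in $y$, and matches the prescribed behaviour at $+\infty$ once the compact-interval patch is combined with the large-$y$ expansion, is the step that requires the most care.
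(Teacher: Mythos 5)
Your reduction $c_b(y,\sqrt{\beta y})=\beta u\bigl[\varphi(u)-u\,\Phi(-u)\bigr]$ with $u=\sqrt{y/\beta}$ and the two-sided Mills estimates are correct, and they pin down the sharp behaviour $\varphi(u)-u\,\Phi(-u)\sim\varphi(u)/u^{2}$. But this is precisely where your proof of the \emph{stated} lemma fails: after multiplying by $\beta u$ and stripping off $e^{-y/(2\beta)}=e^{-u^{2}/2}$, your candidate prefactors are $g_u(y)=\beta/(\sqrt{2\pi}\,u)$ and $g_l(y)=\beta\,(u^{-1}-3u^{-3})/\sqrt{2\pi}$, both of order $y^{-1/2}$, hence tending to $0$ and not to $+\infty$ as the lemma requires. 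Your closing claim that the prescribed behaviour at infinity ``follows directly from the closed-form expressions'' is therefore wrong as written. For the upper bound this is harmless and easily repaired by loosening (the paper simply discards the negative term $-y\,\Phi(-\sqrt{y/\beta})$ and takes $g_u(y)=\sqrt{\beta y/(2\pi)}$, which does diverge). For the lower bound it cannot be repaired at all: your own correct upper bound shows $e^{y/(2\beta)}c_b(y,\sqrt{\beta y})\sim \beta^{3/2}/\sqrt{2\pi y}\to 0$, so no bound of the form $g_l(y)e^{-y/(2\beta)}\le c_b(y,\sqrt{\beta y})$ with $g_l\to+\infty$ can hold.

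The discrepancy is not an oversight on your part but a flaw in the target statement and in the paper's own proof. The paper obtains a divergent $g_l$ from the inequality $\Phi(-x)\le\varphi(x)/\bigl(x+\sqrt{x^{2}+4/\pi}\bigr)$; but the right-hand side is asymptotically $\varphi(x)/(2x)$, i.e.\ about half of $\Phi(-x)\sim\varphi(x)/x$, so that inequality reverses for large $x$ (the sharp Komatsu form is $\Phi(-x)\le 2\varphi(x)/\bigl(x+\sqrt{x^{2}+8/\pi}\bigr)$, and with it the bracket degenerates to the $O(u^{-2})$ term you isolated). Your analysis thus shows the lemma should be restated with $g_l$ positive and only polynomially decaying; this weaker version still supports the downstream argument in Lemma \ref{lem:tec_lem} provided one there picks $\beta$ with $1/(2\beta)$ strictly smaller than $\epsilon$, since $e^{-(\epsilon-1/(2\beta))\kappa}\sqrt{\kappa}\to 0$. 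Two minor further points: the interval $(0,y_0]$ on which you patch is not compact and $e^{y/(2\beta)}c_b(y,\sqrt{\beta y})\to 0$ as $y\to 0^{+}$, so the infimum there is $0$ — positivity of $g_l$ near the origin must be argued pointwise from $c_b>0$ rather than by compactness.
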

\begin{proof}
    Utilizing the (normalized) Bachelier formula \eqref{eq:norm_Bachelier}, we have
    \begin{equation*}
        c_b\left(y, \sqrt{\beta y}\right)=\displaystyle-y\,\Phi\left(-\sqrt{\frac{y}{\beta}}\right)+\sqrt{\beta\,y}\,\varphi\left(-\sqrt{\frac{y}{\beta}}\right)\leq\sqrt{\frac{\beta\,y}{2\pi}}e^{-\frac{1}{2\beta}y}\,\,,
    \end{equation*}
    where the inequality holds because the first addend is negative.\\
    This proves the upper bound choosing
    \[
    g_u(y):=\sqrt{\frac{\beta\,y}{2\pi}}\,\,.
    \]
    For the lower bound, we recall that, for $x>0$, the following inequality holds \citep[see, e.g.,][Sec.2.26, p.179]{mitrinovic1970analytic}
    \begin{equation*}
        \Phi(-x)\leq\frac{1}{x+\sqrt{x^2+\frac{4}{\pi}}}\varphi(-x)\,\,,
    \end{equation*}
    thus,
    \begin{equation}
        c_b\left(y, \sqrt{\beta y}\right)\geq e^{-\frac{1}{2\beta}y}\frac{1}{\sqrt{2\pi}}\left\{-\frac{y}{\sqrt{\frac{y}{\beta}}+\sqrt{\frac{y}{\beta}+\frac{4}{\pi}}}+\sqrt{\beta\,y}\right\}=g_l(y)e^{-\frac{1}{2\beta}y}\,\,,
    \end{equation}
    with
    \begin{equation*}
            g_l(y):=\frac{\sqrt{\beta y}\sqrt{\frac{y}{\beta}+\frac{4}{\pi}}}{\sqrt{\frac{y}{\beta}}+\sqrt{\frac{y}{\beta}+\frac{4}{\pi}}}
    \end{equation*}
\end{proof}
We recall the notion of regularly varying functions. We provide the formal definition \citep[cf.,][p.18]{bingham1987regular} and state a result from \cite{bingham1987regular} concerning their asymptotic behaviour.\\
\begin{definition}
    A positive real-valued measurable function $g$ is regular varying with index $\theta\geq0$, in symbols $g\in\mathcal{R}_\theta$, if $\forall\,\mu>0$
    \begin{equation*}
        \lim_{x\to+\infty}\frac{g(\mu x)}{g(x)}=\mu^\theta\,\,.
    \end{equation*}
\end{definition}

\bigskip

It holds that if $g\in\mathcal{R}_\theta$ and $g\sim h$ for $x\to+\infty$,\footnote{$g\sim h$ for $x\to+\infty$ means $\lim_{x\to+\infty}\frac{g(x)}{h(x)}=1$.} then $h\in\mathcal{R}_\theta$.\\
The following result can be found in \citet[, Thm.4.12.10, p.255]{bingham1987regular}.
\begin{theorem}\label{thm:Bingham}
    Let $g\in\mathcal{R}_\theta$. Then, for $x\to+\infty$
    \begin{equation*}
        -\ln\int_x^{+\infty}e^{-g(y)}\mathrm{d}y\sim g(x)\,\,.
    \end{equation*}
\end{theorem}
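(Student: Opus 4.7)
The plan is to sandwich $\int_x^{+\infty} e^{-g(y)}\,dy$ between two expressions whose logarithms both equal $-g(x)(1+o(1))$, and then conclude. The key tool in both directions is the polynomial control on a regularly varying $g$ provided by the Uniform Convergence Theorem and Potter's bounds (cf. [\citen{bingham1987regular}, Ch.1]).

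For the easy direction (upper bound on $-\ln\int$), I would restrict the domain of integration: for any fixed $\epsilon>0$,
\begin{equation*}
\int_x^{+\infty} e^{-g(y)}\,dy \;\geq\; \epsilon\, x \cdot \exp\Big(-\sup_{y\in[x,(1+\epsilon)x]} g(y)\Big).
\end{equation*}
The Uniform Convergence Theorem identifies the supremum with $(1+\epsilon)^{\theta}\,g(x)\,(1+o(1))$. Taking logarithms, dividing by $g(x)$ (noting that $\ln x = o(g(x))$ by Potter's bounds when $\theta>0$), and letting $\epsilon\downarrow 0$ gives $\limsup_{x\to+\infty}\frac{-\ln\int_x^{+\infty} e^{-g(y)}\,dy}{g(x)}\leq 1$.

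For the reverse direction, I would invoke Potter's bounds: for every $\delta\in(0,\theta)$, there exists $X$ such that $g(y)\geq (1-\delta)(y/x)^{\theta-\delta}\, g(x)$ for all $y\geq x \geq X$. Substituting this lower bound into the integrand and changing variables with $u=y/x$,
\begin{equation*}
\int_x^{+\infty} e^{-g(y)}\,dy \;\leq\; x\int_1^{+\infty}\exp\big(-(1-\delta)\,u^{\theta-\delta}\,g(x)\big)\,du.
\end{equation*}
The further substitution $v=(1-\delta)\,u^{\theta-\delta}\,g(x)$ reduces the right-hand side to a tail of an incomplete gamma integral, asymptotic to $g(x)^{-1}\,e^{-(1-\delta)g(x)}$ as $g(x)\to+\infty$. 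Taking logarithms and dividing by $g(x)$ yields $\liminf \geq 1-\delta$; letting $\delta\downarrow 0$ closes the argument.

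The main obstacle is the boundary case $\theta=0$. Potter's bounds then degenerate to $g(y)/g(x)\geq (1-\delta)(y/x)^{-\delta}$ for $y\geq x$, and the upper-bounding integral above is no longer convergent. In that regime one must work directly from the Karamata representation of slowly varying functions, together with the (implicit) requirement that $g(x)\to+\infty$ fast enough for $\ln x$ to be negligible relative to $g(x)$ and for $\int_x^{+\infty} e^{-g(y)}\,dy$ itself to converge.
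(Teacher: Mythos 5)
The paper does not prove this statement at all: it is imported verbatim as a known result, with the proof delegated to Bingham--Goldie--Teugels (Th.~4.12.10). Your proposal therefore cannot match the paper's argument, but as a self-contained proof it is sound for $\theta>0$: the lower bound on the integral over $[x,(1+\epsilon)x]$ combined with the Uniform Convergence Theorem gives $\limsup\le(1+\epsilon)^\theta$, the Potter-bound domination $g(y)\ge(1-\delta)(y/x)^{\theta-\delta}g(x)$ reduces the upper bound to an incomplete-gamma tail $\Gamma(a,T)\sim T^{a-1}e^{-T}$, and in both directions the error terms ($\ln x$, $\ln g(x)$, constants) are absorbed because $g(x)\gtrsim x^{\theta-\delta}$ forces $\ln x=o(g(x))$. (In the second display you drop the prefactor $x$ from the stated asymptotic, but it only contributes another $o(g(x))$ term, so the conclusion stands.) Your closing caveat about $\theta=0$ is not a gap in your argument but a genuine imprecision in the theorem as transcribed here: with the paper's definition of $\mathcal{R}_\theta$ allowing $\theta\ge0$, the statement is false for, say, bounded slowly varying $g$, where the integral diverges; the BGT original assumes an index in $(0,\infty]$, and the paper only ever invokes the result with $g=-\ln\bar F\in\mathcal{R}_\theta$, $\theta\ge1$, so your proof covers everything actually needed. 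What your route buys is an elementary, fully explicit derivation in place of a citation; what it costs is generality (BGT's Tauberian machinery handles the slowly varying index and measure-theoretic versions that your direct computation does not).
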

\section{Smile asymptotics}
\label{sec:smiles}

In this Section, we derive the main results of the paper concerning the asymptotic behaviour of the Bachelier implied volatility for large absolute values of moneyness degree. We establish a rigorous link between the decay rate of the underlying's return distribution tails and the slope of the implied variance for large absolute moneyness degree.

\smallskip

We proceed in two steps. First, we investigate the direct problem: assuming a specific tail behaviour for the returns (expressed in terms of regular variation), we deduce the corresponding asymptotic formula for the implied volatility in the wings of the smile. Subsequently, we address the inverse problem, showing that the asymptotic behaviour of the implied volatility itself determines the analytic properties of the characteristic function of returns. This latter result is particularly significant as it allows for model characterisation based on observable market data.

\smallskip

We begin by presenting a preliminary result that relates the call and put prices to the cdf of returns.

\begin{lemma}\label{lem:bcl_price_barF}
    The following holds for the call price $c(\kappa)$ in \eqref{eq:call_integral}
    \begin{equation*}
        c(\kappa)=\int_\kappa^{+\infty}\bar{F}(\zeta)\mathrm{d}\zeta\,\,,
    \end{equation*}
    while for the put price $p(\kappa)$ in \eqref{eq:put_integral_cdf}
    \begin{equation*}
        p(\kappa)=\int_{-\infty}^\kappa F(\zeta)\mathrm{d}\zeta\,\,.
    \end{equation*}
\end{lemma}
\begin{proof}
    Let us recall that 
    \[
    [\zeta-\kappa]^+=\int_{\kappa}^{+\infty}\mathbbm{1}_{\{\zeta>x\}}\mathrm{d}x\,\,.
    \]
    Since the integrand is non-negative, by Fubini's Theorem \citep[cf.][Thm.8.8, p.164]{rudin2006real}, we can interchange the order of expectation and integration:
    \[
    c(\kappa) = \int_\kappa^{+\infty} \mathbb{E}\left[ \mathbbm{1}_{\{\zeta > x\}} \right] \mathrm{d}x = \int_\kappa^{+\infty} \mathbb{P}(\zeta > x) \mathrm{d}x = \int_\kappa^{+\infty} \bar{F}(x) \mathrm{d}x\,\,,
    \]
    this proves the thesis. The proof for the put price is analogous
\end{proof}
We can introduce the following result on the asymptotic implied volatility smile, that is an extension to the Bachelier framework of Theorem 1.1 in \cite{benaim2009regular}. Before proving the Theorem, we demonstrate two technical Lemmas.

\begin{lemma}
\label{lem:tec_lem}{\rm (Right wing properties)}.\\
Assume that one of the following assumptions hold \footnote{$g(\kappa)=O(h(\kappa))$ means that, $\exists\,\kappa_0>0,\,C>0$ such that $g(\kappa)\geq C\,h(\kappa)$ $\forall\,\kappa\geq\kappa_0$. $g(\kappa)=o(h(\kappa))$ as $\kappa\to+\infty$ means that $\lim_{\kappa\to+\infty}\frac{g(\kappa)}{h(\kappa)}=0$.}
\begin{align*}
    &\text{a)}\quad
\displaystyle\exists\,\epsilon>0\quad\text{such that}\quad c(\kappa)\leq O\left(e^{-\epsilon\kappa}\right)\quad\text{as}\quad\kappa\to+\infty\;\;,\\[3mm]
&\text{b)}\quad\displaystyle\exists\,\mu>0\quad\text{such that}\quad \mathrm{I}(\kappa)\sim\mu\sqrt{\kappa}\quad\text{as}\quad\kappa\to+\infty\;\;,
\end{align*}
then the following asymptotic properties of the implied volatility $\mathrm{I}(\kappa)$ hold 
\begin{align*}
&\text{i)}\quad
\displaystyle\lim_{\kappa\to+\infty}\frac{\sqrt{\kappa^2+\frac{4}{\pi}\,\mathrm{I}(\kappa)^2}}{\kappa+\sqrt{\kappa^2+\frac{4}{\pi}\,\mathrm{I}(\kappa)^2}}=\lim_{\kappa\to+\infty}\frac{\sqrt{\kappa^2+2\,\mathrm{I}(\kappa)^2}}{\kappa+\sqrt{\kappa^2+2\,\mathrm{I}(\kappa)^2}}=\frac{1}{2}\;\;,\\[3mm]
&\text{ii)}\quad\displaystyle\ln\left(\mathrm{I}(\kappa)\right)+\ln\left(\frac{1}{2}\right)=o\left(\frac{\kappa^2}{2\,\mathrm{I}(\kappa)^2}\right)\quad\text{as}\;\;\kappa\to+\infty\;\;.
\end{align*}
\end{lemma}
\begin{proof}
    If a) holds, then, thanks to Lemma \ref{lem:BoundsBachelier}, $\forall\,\beta>0$ such that $\frac{1}{2\beta}\leq \epsilon$
    \begin{equation}
    \label{eq:ratio_to_0}
        \frac{c_b(\kappa,\mathrm{I}(\kappa))}{c_b(\kappa,\sqrt{\beta\,\kappa})}\leq\frac{O(e^{-\epsilon\,\kappa})}{g_l(\kappa)e^{-\frac{1}{2\beta}\kappa}}\rightarrow0^+\quad\text{as}\quad \kappa\rightarrow+\infty\,\,.
    \end{equation}
    This means that, $\forall\,\epsilon>0$ such that a) holds, choosing $\beta=\frac{1}{2\epsilon}$ in \eqref{eq:ratio_to_0}, there exists $\kappa_0$ such that for all $\kappa > \kappa_0$, $\mathrm{I}(\kappa)<\sqrt{\frac{\kappa}{2\epsilon}}$,  thus
\begin{equation*}
\displaystyle\lim_{\kappa\to+\infty}\frac{\sqrt{\kappa^2+\frac{4}{\pi}\,\mathrm{I}(\kappa)^2}}{\kappa+\sqrt{\kappa^2+\frac{4}{\pi}\,\mathrm{I}(\kappa)^2}}=\displaystyle\lim_{\kappa\to+\infty}\frac{\sqrt{\kappa^2+2\,\mathrm{I}(\kappa)^2}}{\kappa+\sqrt{\kappa^2+2\,\mathrm{I}(\kappa)^2}}=\frac{1}{2}\;\;,
\end{equation*}
that proves i) if assumption a) holds. Moreover,
    \begin{equation}    
    \label{eq:limit_log}
\lim_{\kappa\to+\infty}\frac{\kappa^{2}}{2\,\mathrm{I}(\kappa)^2\,\left|\ln\left(\mathrm{I}(\kappa)\right) + \ln\left(\frac{1}{2}\right)\right|}=+\infty\,\,,
    \end{equation}
that holds since, if $\mathrm{I}(\kappa)\not\to+\infty$, then the denominator in \eqref{eq:limit_log} does not diverge, otherwise, if $\mathrm{I}(\kappa)\to+\infty$, then $\mathrm{I}(\kappa)^2\,\ln\left(\mathrm{I}(\kappa)\right)=o\left(\mathrm{I}(\kappa)^3\right)=o\left(\kappa^2\right)$.\\
Equation \eqref{eq:limit_log} implies that, as $\kappa\to+\infty$
    \[
    \ln\left(\mathrm{I}(\kappa)\right)+\ln\left(\frac{1}{2}\right)=o\left(\frac{\kappa^2}{2\,\mathrm{I}(\kappa)^2}\right)\,\,,
    \]
    that proves ii) if assumption a) holds.\\
    \indent
    Finally, under assumption b), properties i) and ii) follow straightforwardly from the asymptotic behaviour $\mathrm{I}(\kappa)\sim\mu\sqrt{\kappa}$ as $\kappa\to+\infty$, observing that $\mathrm{I}(\kappa)^2$ scales linearly with $\kappa$
\end{proof}
\begin{lemma}
\label{lem:tec_lem_left}{\rm (Left wing properties).}\\
Assume that one of the following holds
\begin{align*}
    &\text{a)}\quad
\displaystyle\exists\,\epsilon>0\quad\text{such that}\quad p(-\kappa)\leq O\left(e^{-\epsilon\kappa}\right)\quad\text{as}\quad\kappa\to+\infty\;\;,\\[3mm]
&\text{b)}\quad\displaystyle\exists\,\mu>0\quad\text{such that}\quad I(-\kappa)\sim\mu\sqrt{\kappa}\quad\text{as}\quad\kappa\to+\infty\;\;,
\end{align*}
then the following asymptotic properties of the implied volatility $\mathrm{I}(\kappa)$ hold
\begin{align*}
&\text{i)}\quad
\displaystyle\lim_{\kappa\to+\infty}\frac{\sqrt{\kappa^2+\frac{4}{\pi}\,\mathrm{I}(-\kappa)^2}}{\kappa+\sqrt{\kappa^2+\frac{4}{\pi}\,\mathrm{I}(-\kappa)^2}}=\displaystyle\lim_{\kappa\to+\infty}\frac{\sqrt{\kappa^2+2\,\mathrm{I}(-\kappa)^2}}{\kappa+\sqrt{\kappa^2+2\,\mathrm{I}(-\kappa)^2}}=\frac{1}{2}\;\;,\\[3mm]
&\text{ii)}\quad\displaystyle\ln\left(\mathrm{I}(-\kappa)\right)+\ln\left(\frac{1}{2}\right)=o\left(\frac{\kappa^2}{2\,\mathrm{I}(-\kappa)^2}\right)\quad\text{as}\;\;\kappa\to+\infty\;\;.
\end{align*}
\end{lemma}
\begin{proof}
The proof is analogous to Lemma \ref{lem:tec_lem}, considering that, $\forall\,\sigma>0$ and $\kappa\in\mathbb{R}$
\[
p_b(-\kappa,\sigma)=c_b(\kappa,\sigma)
\]
\end{proof}

The following Theorem extends to the Bachelier framework Theorem 1.1 in \cite{benaim2009regular}, which has been proven in the Black-Scholes framework.
\begin{theorem}\label{prop:right_wing}{\rm (Right wing formula)}\\
    Assume $\theta\geq1$. Then, for $\kappa\to+\infty$, the following implications hold 
    \begin{equation*}
\vspace{0.2cm}
        \text{i)}\;\;-\ln \bar{F}(\kappa)\in\mathcal{R}_\theta\;\;\Rightarrow\;\;\text{ii)}\;\;\begin{cases}
-\ln c(\kappa)\sim-\ln \bar{F}(\kappa)&\\
-\ln c(\kappa)\in\mathcal{R}_\theta&
\end{cases}
\hspace{-0.5cm}\;\;\Rightarrow\;\;\text{iii)}\;\;\frac{\mathrm{I}(\kappa)^2}{\kappa}\sim -\frac{\kappa}{2\ln \bar{F}(\kappa)}\,\,.
    \end{equation*}
\end{theorem}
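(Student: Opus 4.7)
The plan is to prove the two implications in sequence.

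For (i) $\Rightarrow$ (ii), I would apply Lemma \ref{lem:bcl_price_barF} to write $c(\kappa) = \int_\kappa^{+\infty} e^{-g(\zeta)}\,\mathrm{d}\zeta$ with $g := -\ln \bar{F} \in \mathcal{R}_\theta$. Theorem \ref{thm:Bingham} then yields directly $-\ln c(\kappa) \sim g(\kappa) = -\ln \bar{F}(\kappa)$ as $\kappa \to +\infty$, and the closure of $\mathcal{R}_\theta$ under the equivalence $\sim$ (noted just before that theorem) upgrades this to $-\ln c(\kappa) \in \mathcal{R}_\theta$. This step is essentially automatic once Lemma \ref{lem:bcl_price_barF} and Theorem \ref{thm:Bingham} are invoked.

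For (ii) $\Rightarrow$ (iii), the strategy is to compare an explicit asymptotic of $\ln c_b(\kappa,\mathrm{I}(\kappa))$ with $-\ln c(\kappa) \sim -\ln \bar{F}(\kappa)$. I would first derive a sharp expansion of $c_b(\kappa,\sigma)$ in the regime $\kappa/\sigma \to +\infty$: plugging Mills' ratio $\Phi(-x) = \varphi(x) x^{-1}\bigl(1 - x^{-2} + O(x^{-4})\bigr)$ into \eqref{eq:norm_Bachelier} produces a cancellation between the leading terms of $-\kappa\,\Phi(-\kappa/\sigma)$ and $\sigma\,\varphi(-\kappa/\sigma)$, leaving
\[
c_b(\kappa,\sigma) \;=\; \frac{\sigma^{3}}{\kappa^{2}}\,\varphi\!\left(\frac{\kappa}{\sigma}\right)\bigl(1 + O(\sigma^{2}/\kappa^{2})\bigr).
\]
Setting $\sigma = \mathrm{I}(\kappa)$, the expansion is licit because Lemma \ref{lem:tec_lem}(i) guarantees $\mathrm{I}(\kappa) = o(\kappa)$. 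Taking logarithms of $c(\kappa) = c_b(\kappa,\mathrm{I}(\kappa))$ and collecting terms,
\[
-\ln c(\kappa) \;=\; \frac{\kappa^{2}}{2\,\mathrm{I}(\kappa)^{2}} + 2\ln\kappa - 3\ln\mathrm{I}(\kappa) + \tfrac{1}{2}\ln(2\pi) + o(1).
\]

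It then remains to verify that each logarithmic correction is $o\!\bigl(\kappa^{2}/\mathrm{I}(\kappa)^{2}\bigr)$. Lemma \ref{lem:tec_lem}(ii) handles the $\ln\mathrm{I}(\kappa)$ term directly, while the bound $\mathrm{I}(\kappa) = O(\sqrt{\kappa})$ obtained inside the proof of that lemma gives $\kappa^{2}/\mathrm{I}(\kappa)^{2} \geq C\,\kappa$, whence $\ln\kappa = o\!\bigl(\kappa^{2}/\mathrm{I}(\kappa)^{2}\bigr)$. Consequently $-\ln c(\kappa) \sim \kappa^{2}/(2\,\mathrm{I}(\kappa)^{2})$; combining with $-\ln c(\kappa) \sim -\ln \bar{F}(\kappa)$ from (ii) yields $-\ln \bar{F}(\kappa) \sim \kappa^{2}/(2\,\mathrm{I}(\kappa)^{2})$, which is exactly (iii) after rearrangement.

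The main obstacle I anticipate is the control of the logarithmic remainders: without a priori growth information on $\mathrm{I}(\kappa)$, one cannot rule out that $\ln\mathrm{I}(\kappa)$ or $\ln\kappa$ is comparable to $\kappa^{2}/\mathrm{I}(\kappa)^{2}$, which would destroy the matching. This is precisely the role of Lemma \ref{lem:tec_lem}: its two items pin $\mathrm{I}(\kappa)$ between $o(\kappa)$ on one side (so the Mills-ratio expansion is valid) and $O(\sqrt{\kappa})$ on the other (so the logarithms get absorbed). Once that pre-processing is in hand, the rest of the proof is a matter of matching leading exponents.
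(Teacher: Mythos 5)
Your proof is correct. The first implication is handled exactly as in the paper (Lemma \ref{lem:bcl_price_barF} plus Theorem \ref{thm:Bingham}, then closure of $\mathcal{R}_\theta$ under $\sim$), and the second follows the same skeleton as the paper's: establish $-\ln c(\kappa)\sim\kappa^{2}/(2\,\mathrm{I}(\kappa)^{2})$ and absorb the logarithmic corrections via Lemma \ref{lem:tec_lem}. The one real difference is the technical device. The paper sandwiches $c_b(\kappa,\mathrm{I}(\kappa))$ between two non-asymptotic Mills-ratio inequalities, so the prefactor of $\varphi(-\kappa/\mathrm{I}(\kappa))$ is trapped between quantities whose logarithms are $\ln\mathrm{I}(\kappa)+O(1)$; you instead expand Mills' ratio asymptotically and exploit the leading-order cancellation to obtain the explicit prefactor $\mathrm{I}(\kappa)^{3}/\kappa^{2}$. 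Your route costs two extra pieces of bookkeeping that the paper's bounds avoid --- the a priori fact $\mathrm{I}(\kappa)=o(\kappa)$ to legitimise the expansion (correctly read off from item i) of the lemma) and the estimate $\ln\kappa=o\bigl(\kappa^{2}/\mathrm{I}(\kappa)^{2}\bigr)$, for which you reach inside the proof of Lemma \ref{lem:tec_lem} for $\mathrm{I}(\kappa)=O(\sqrt{\kappa})$ (in the standard Landau sense, not the paper's inverted convention) --- but it buys the correct leading constant. Indeed the two prefactors, $\sim\mathrm{I}(\kappa)/2$ in the paper versus $\mathrm{I}(\kappa)^{3}/\kappa^{2}$ in your version, are not asymptotically equivalent; this traces to a missing factor of $2$ in the Mills-ratio bounds as quoted in the paper (the Komatu--Birnbaum inequalities read $2\varphi(x)/(x+\sqrt{x^{2}+4})\le\Phi(-x)\le 2\varphi(x)/(x+\sqrt{x^{2}+8/\pi})$; without the $2$ the stated upper bound would contradict $\Phi(-x)\sim\varphi(x)/x$). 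The discrepancy is confined to the $o\bigl(\kappa^{2}/(2\,\mathrm{I}(\kappa)^{2})\bigr)$ remainder, so both arguments deliver iii); yours is the sharper of the two.
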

\begin{proof}
    
    \smallskip

    We prove that i) $\Rightarrow$ ii). Thanks to Lemma \ref{lem:bcl_price_barF},
    \begin{equation*}
        c(\kappa)=\int_\kappa^{+\infty}\bar{F}(\zeta)\mathrm{d}\zeta=\int_\kappa^{+\infty}e^{-\left(-\ln\bar{F}(\zeta)\right)}\mathrm{d}\zeta\,\,,
    \end{equation*}
    thus, thanks to Theorem \ref{thm:Bingham}, with $g(\bullet)=-\ln\bar{F}(\bullet)$
    \begin{equation}
    \label{eq:as_logbarF}
        -\ln c(\kappa)=-\ln\int_\kappa^{+\infty}e^{-\left(-\ln\bar{F}(\zeta)\right)}\mathrm{d}\zeta\sim-\ln\bar{F}(\kappa)\in\mathcal{R}_\theta\,\,.
    \end{equation}

    \smallskip

    Let us prove that ii) $\Rightarrow$ iii). We first recall the following bounds for the function $\Phi(-x)$, for $x>0$ \citep[see, e.g.,][Sec.2.26, p.179]{mitrinovic1970analytic}
    \begin{equation*}
        \varphi(-x)\frac{1}{x+\sqrt{x^2+2}}\leq\Phi(-x)\leq\varphi(-x)\frac{1}{x+\sqrt{x^2+\frac{4}{\pi}}}\,\,.
    \end{equation*}
    From these inequalities, we obtain the following bounds for the normalized Bachelier formula \eqref{eq:norm_Bachelier} for any positive $\kappa$
    \begin{equation*}
        \varphi\left(-d\right)\left\{\sigma-\frac{\kappa}{d+\sqrt{d^2+\frac{4}{\pi}}}\right\}\leq c_b(\kappa,\sigma)\leq\varphi\left(-d\right)\left\{\sigma-\frac{\kappa}{d+\sqrt{d^2+2}}\right\}\,\,,
    \end{equation*}
    with $\displaystyle d:=\frac{\kappa}{\sigma}$. Thus, considering the implied volatility $\mathrm{I}(\kappa)$ in \eqref{eq:IV_eq} and $\displaystyle d(\kappa):=\frac{\kappa}{\mathrm{I}(\kappa)}$, we obtain
    \begin{equation}
\label{eq:inequality_mills}
        \varphi\left(-d(\kappa)\right)\left\{\mathrm{I}(\kappa)-\frac{\kappa}{d(\kappa)+\sqrt{d(\kappa)^2+\frac{4}{\pi}}}\right\}\leq c(\kappa)\leq\varphi\left(-d(\kappa)\right)\left\{\mathrm{I}(\kappa)-\frac{\kappa}{d(\kappa)+\sqrt{d(\kappa)^2+2}}\right\}\,\,.
    \end{equation}
    Let us observe that
    \begin{equation}
\label{eq:arg_log}
    \begin{cases}\displaystyle
        \mathrm{I}(\kappa)-\frac{\kappa}{d(\kappa)+\sqrt{d(\kappa)^2+\frac{4}{\pi}}}&=\displaystyle\frac{\mathrm{I}(\kappa)\sqrt{\kappa^2+\frac{4}{\pi}\mathrm{I}(\kappa)^2}}{\kappa+\sqrt{\kappa^2+\frac{4}{\pi}\mathrm{I}(\kappa)^2}}\vspace{3mm}
\\
\vspace{3mm}
\displaystyle
        \mathrm{I}(\kappa)-\frac{\kappa}{d(\kappa)+\sqrt{d(\kappa)^2+2}}&=\displaystyle\frac{\mathrm{I}(\kappa)\sqrt{\kappa^2+2\,\mathrm{I}(\kappa)^2}}{\kappa+\sqrt{\kappa^2+2\,\mathrm{I}(\kappa)^2}}

    \end{cases}\;\;.
    \end{equation}

    If we substitute \eqref{eq:arg_log} in \eqref{eq:inequality_mills} and apply the natural logarithm, we get
    \begin{align*}
        -\frac{\kappa^2}{2\,\mathrm{I}(\kappa)^2}-\ln\left(\sqrt{2\pi}\right)&+\ln\left(\frac{\mathrm{I}(\kappa)\sqrt{\kappa^2+\frac{4}{\pi}\mathrm{I}(\kappa)^2}}{\kappa+\sqrt{\kappa^2+\frac{4}{\pi}\mathrm{I}(\kappa)^2}}\right)\leq\ln c(\kappa)\leq\\&\leq\displaystyle-\frac{\kappa^2}{2\,\mathrm{I}(\kappa)^2}-\ln\left(\sqrt{2\pi}\right)+\ln\left(\frac{\mathrm{I}(\kappa)\sqrt{\kappa^2+2\,\mathrm{I}(\kappa)^2}}{\kappa+\sqrt{\kappa^2+2\,\mathrm{I}(\kappa)^2}}\right)
    \end{align*}
    \begin{equation}
    \label{eq:bounds_rewritten}
        \Leftrightarrow\,\ln\left(\frac{\mathrm{I}(\kappa)\sqrt{\kappa^2+\frac{4}{\pi}\mathrm{I}(\kappa)^2}}{\kappa+\sqrt{\kappa^2+\frac{4}{\pi}\mathrm{I}(\kappa)^2}}\right)\leq\ln c(\kappa)+\frac{\kappa^2}{2\,\mathrm{I}(\kappa)^2}+\ln{\sqrt{2\pi}}\leq\displaystyle\ln\left(\frac{\mathrm{I}(\kappa)\sqrt{\kappa^2+2\,\mathrm{I}(\kappa)^2}}{\kappa+\sqrt{\kappa^2+2\,\mathrm{I}(\kappa)^2}}\right)\,\,.
    \end{equation}
    We define
    \begin{equation}
    \label{eq:epsilon}
    \begin{cases}
\displaystyle        \varepsilon_1(\kappa):=&\ln c(\kappa)+\displaystyle\frac{\kappa^2}{2\,\mathrm{I}(\kappa)^2}\\[3mm]
\displaystyle        \varepsilon_2(\kappa):=&\varepsilon_1(\kappa)+\ln{\sqrt{2\pi}}
    \end{cases}\,\,.
    \end{equation}
    If $\varepsilon_1(\kappa)=o\left(\displaystyle\frac{\kappa^2}{2\,\mathrm{I}(\kappa)^2}\right)$ as $\kappa\to+\infty$, then
    \begin{equation*}
        \ln c(\kappa)=-\frac{\kappa^2}{2\,\mathrm{I}(\kappa)^2}+o\left(\frac{\kappa^2}{2\,\mathrm{I}(\kappa)^2}\right)\quad\Leftrightarrow\quad\frac{\ln c(\kappa)}{\kappa}=-\frac{\kappa}{2\,\mathrm{I}(\kappa)^2}+o\left(\frac{\kappa}{2\,\mathrm{I}(\kappa)^2}\right)\,\,,
    \end{equation*}
    this proves that, as $\kappa\to+\infty$,
    \begin{equation}
    \label{eq:asymptotic_lnc_I2}
        \frac{\ln c(\kappa)}{\kappa}\sim-\frac{\kappa}{2\,\mathrm{I}(\kappa)^2}\quad\Leftrightarrow\quad\frac{\mathrm{I}(\kappa)^2}{\kappa}\sim-\frac{\kappa}{2\,\ln c(\kappa)}\sim-\frac{\kappa}{2\,\ln \bar{F}(\kappa)}\,\,,
    \end{equation}
    where the last asymptotic holds thanks to \eqref{eq:as_logbarF}.
    
\bigskip

    To conclude the proof, we have to demonstrate that $\varepsilon_1(\kappa)=o\left(\frac{\kappa^2}{2\,\mathrm{I}(\kappa)^2}\right)$ as $\kappa\to+\infty$. From \eqref{eq:bounds_rewritten}, we have
    \begin{equation}
        \ln\left(\mathrm{I}(\kappa)\right)+\ln\left(\frac{\sqrt{\kappa^2+\frac{4}{\pi}\mathrm{I}(\kappa)^2}}{\kappa+\sqrt{\kappa^2+\frac{4}{\pi}\mathrm{I}(\kappa)^2}}\right)\leq\varepsilon_2(\kappa)\leq\displaystyle\ln\left(\mathrm{I}(\kappa)\right)+\ln\left(\frac{\sqrt{\kappa^2+2\,\mathrm{I}(\kappa)^2}}{\kappa+\sqrt{\kappa^2+2\,\mathrm{I}(\kappa)^2}}\right)\,\,\textcolor{blue}{.}
    \end{equation}
    We recall that if $-\ln c(\kappa) \in \mathcal{R}_\theta$ with $\theta \ge 1$, then the call price $c(\kappa)$ decays at least exponentially as $\kappa \to +\infty$. 
    Therefore, assumption a) of Lemma \ref{lem:tec_lem} is satisfied.
Applying point i) of Lemma \ref{lem:tec_lem}, the inequalities above imply that
\begin{equation}
    \label{eq:bounds_eps2}
\varepsilon_2(\kappa)\sim\ln\left(\mathrm{I}(\kappa)\right)+\ln\left(\frac{1}{2}\right)\,\,.
\end{equation}

    Finally, the asymptotic in \eqref{eq:bounds_eps2} and point ii) of Lemma \ref{lem:tec_lem} imply that
    \begin{equation*}
       	\varepsilon_2(\kappa)=o\left(\frac{\kappa^2}{2\,\mathrm{I}(\kappa)^2}\right)\Rightarrow\varepsilon_1(\kappa)=o\left(\frac{\kappa^2}{2\,\mathrm{I}(\kappa)^2}\right)\quad\text{as}\;\,\kappa\to+\infty
    \end{equation*}
where the last implication is due to the definition of $\varepsilon_2(\kappa)$ in \eqref{eq:epsilon}
\end{proof}
In the Theorem above, we have shown the relation between the implied volatility for large positive moneyness degree and the returns' distribution. Now, in the same way, we can establish a connection between the implied volatility for large negative moneyness degrees and returns' distribution that extends Theorem 1.2 in \cite{benaim2009regular} to the Bachelier framework.
\begin{theorem}\label{prop:left_wing}{\rm (Left wing formula)}\\
    Assume $\theta\geq1$. Then, for $\kappa\to+\infty$, the following implications hold
    \begin{equation*}
        \text{i)}\;\;-\ln F(-\kappa)\in\mathcal{R}_\theta\,\,\Rightarrow\,\,\text{ii)}\;\;
\begin{cases}
-\ln p(-\kappa)\sim-\ln F(-\kappa)&\\
-\ln p(-\kappa)\in\mathcal{R}_\theta&
\end{cases}
\hspace{-0.5cm}
\,\,\Rightarrow\,\,\text{iii)}\;\;\frac{\mathrm{I}(-\kappa)^2}{\kappa}\sim -\frac{\kappa}{2\ln F(-\kappa)}\,\,.
    \end{equation*}
\end{theorem}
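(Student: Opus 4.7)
The strategy mirrors the proof of Theorem \ref{prop:right_wing}, exploiting the put--call symmetry $p_b(-\kappa,\sigma)=c_b(\kappa,\sigma)$ together with Lemmas \ref{lem:bcl_price_barF_put} and \ref{lem:tec_lem_left}, which provide all the left-wing analogues of the ingredients used there.

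For the implication i) $\Rightarrow$ ii), I would start from Lemma \ref{lem:bcl_price_barF_put} and perform the change of variables $\zeta=-y$ to write
\begin{equation*}
p(-\kappa)=\int_{-\infty}^{-\kappa}F(\zeta)\,\mathrm{d}\zeta=\int_{\kappa}^{+\infty}F(-y)\,\mathrm{d}y=\int_{\kappa}^{+\infty}e^{-\left(-\ln F(-y)\right)}\,\mathrm{d}y.
\end{equation*}
Since by hypothesis the function $y\mapsto -\ln F(-y)$ belongs to $\mathcal{R}_\theta$, Theorem \ref{thm:Bingham} yields $-\ln p(-\kappa)\sim -\ln F(-\kappa)$ as $\kappa\to+\infty$. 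Regular variation being preserved under asymptotic equivalence, one also obtains $-\ln p(-\kappa)\in\mathcal{R}_\theta$.

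For the implication ii) $\Rightarrow$ iii), the key observation is that the put--call symmetry allows the defining equation of the Implied Volatility on the left wing to be rewritten as $c_b(\kappa,\mathrm{I}(-\kappa))=p(-\kappa)$ for $\kappa>0$. With this identification, the chain of inequalities \eqref{eq:inequality_mills}--\eqref{eq:arg_log} transfers verbatim with $\mathrm{I}(-\kappa)$ in place of $\mathrm{I}(\kappa)$ and $p(-\kappa)$ in place of $c(\kappa)$. Setting
\begin{equation*}
\varepsilon_1(\kappa):=\ln p(-\kappa)+\frac{\kappa^2}{2\,\mathrm{I}(-\kappa)^2},\qquad \varepsilon_2(\kappa):=\varepsilon_1(\kappa)+\ln\sqrt{2\pi},
\end{equation*}
point i) of Lemma \ref{lem:tec_lem_left} gives $\varepsilon_2(\kappa)\sim \ln \mathrm{I}(-\kappa)+\tfrac{1}{2}$, while point ii) of the same Lemma then forces $\varepsilon_2(\kappa)=o\bigl(\kappa^2/(2\,\mathrm{I}(-\kappa)^2)\bigr)$ and hence $\varepsilon_1(\kappa)=o\bigl(\kappa^2/(2\,\mathrm{I}(-\kappa)^2)\bigr)$. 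Exactly as in Theorem \ref{prop:right_wing}, this yields
\begin{equation*}
\frac{\mathrm{I}(-\kappa)^2}{\kappa}\sim -\frac{\kappa}{2\ln p(-\kappa)}\sim -\frac{\kappa}{2\ln F(-\kappa)},
\end{equation*}
where the last equivalence uses the asymptotic established in ii).

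There is essentially no new obstacle here: the hard work has already been done in the right-wing proof and encapsulated in the symmetric Lemmas \ref{lem:bcl_price_barF_put} and \ref{lem:tec_lem_left}. The only points that require care are bookkeeping ones, namely performing the reflection $\zeta\mapsto -y$ consistently and checking that condition (IL) supplies the exponential bound $p(-\kappa)\leq O(e^{-\epsilon\kappa})$ needed to apply the left-wing analogue of Lemma \ref{lem:BoundsBachelier} inside Lemma \ref{lem:tec_lem_left} --- an observation the authors have already made in the statement of that Lemma.
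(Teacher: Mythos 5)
Your proposal is correct and follows exactly the route the paper intends: the paper's own proof consists of the single remark that the argument is analogous to Theorem \ref{prop:right_wing} via Lemmas \ref{lem:bcl_price_barF_put} and \ref{lem:tec_lem_left}, and you have simply (and accurately) spelled out that analogy, including the reflection $\zeta\mapsto -y$ and the put--call symmetry $c_b(\kappa,\mathrm{I}(-\kappa))=p(-\kappa)$.
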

\begin{proof}
    The proof is analogous to Theorem \ref{prop:right_wing}, using Lemma \ref{lem:bcl_price_barF} and \ref{lem:tec_lem_left}
\end{proof}
As a consequence of Theorems \ref{prop:right_wing} and \ref{prop:left_wing}, it holds that if $-\ln\bar{F}(\kappa)$ and $-\ln F(-\kappa)\in\mathcal{R}_\theta$, then
\begin{equation*}
    \displaystyle\frac{\mathrm{I}(\kappa)^2}{\kappa}\sim\begin{cases}&\displaystyle-\frac{\kappa}{2\,\ln \bar{F}(\kappa)}\quad\text{as}\;\,\kappa\to+\infty\,\,,\\[5mm]
    &\displaystyle-\frac{\kappa}{2\,\ln F(\kappa)}\quad\text{as}\;\,\kappa\to-\infty\end{cases}\,\,.
\end{equation*}
In many models of practical interest, the asymptotic behaviour of the returns' distribution is not explicitly known, while the characteristic function  of the random variable $\zeta$, defined as
\[
\phi(\xi):=\mathbb{E}\left[e^{i\xi\zeta}\right]=\int_{-\infty}^{+\infty}e^{i\xi\zeta}\mathrm{d}F(\zeta)\,\,,
\]
with $\xi$ a complex number with real part $\Re(\xi)$ and imaginary part $\Im(\xi)$, in most cases is explicitly known. We can establish a direct link between the implied volatility formula and the singularity structure of the characteristic function.

\smallskip

Theorem 3.1, p.12 in \cite{Lukacs} ensures that if the characteristic function $\phi(\xi)$ ``is regular in a neighborhood of the origin then it is also regular in a horizontal strip" bounded by two singularities located on the imaginary axis. We focus on the behaviour of $\phi(\xi)$ near the boundaries of its domain of analyticity. To this end, we introduce the following conditions.

\smallskip

\textbf{Condition I:} $\phi(\xi)$  is analytic in the horizontal strip $\Im(\xi)\in(-\lln,\lrn)$ with $\lambda_\pm > 0$ and finite.

\smallskip

\textbf{Condition II:} If Condition I holds and, for some $n \geq 0$,
\[
(-i)^n\phi^{(n)}\left(\lr i\left(\lambda_\pm - s\right)\right) \sim s^{-\rho_\pm} \, l_\pm\left(\frac{1}{s}\right)\quad\text{as $s \to 0^+$}\;\;,
\]
for some $\rho_\pm > 0$, and $l_\pm \in \mathcal{R}_0$ and $\phi^{(n)}(u)$ denotes the $n$-th derivative of the function $\phi(\bullet)$, evaluated at $u$.

\smallskip

The latter condition ensures that some derivatives of the characteristic function diverge at $\pm i\lambda_\pm$ in a regularly varying manner. As we will see later, the behaviour of the characteristic function in a neighbourhood of these two singularities is crucial, as summarized in Condition II. We state our result in the following Theorem.
\begin{theorem}\label{prop:analyticity_smile_2}
Consider $\lambda_\pm > 0$ finite. Then, the following implications hold:
\begin{equation*}
\text{i)}\quad
\text{Condition II}
\;\Rightarrow\;
\text{ii)}\quad
\begin{cases}
\ln \bar{F}(\kappa) \sim -\lambda_+ \kappa, \\
\ln F(-\kappa) \sim -\lambda_- \kappa
\end{cases}
\;\; \text{as } \kappa \to +\infty
\;\Rightarrow\;
\text{iii)}\quad
\lim_{\kappa \to \pm \infty} \frac{\mathrm{I}(\kappa)^2}{|\kappa|} = \frac{1}{2 \lambda_\pm}.
\end{equation*}
\end{theorem}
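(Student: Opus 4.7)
The plan is to prove the two implications separately. Via the identity $\phi^{(n)}(\xi)=i^n M^{(n)}(i\xi)$ with $M(u):=\phi(-iu)=\mathbb{E}\left[e^{u\zeta}\right]$ the moment generating function of $\zeta$ (analytic on $u\in(-\lambda_+,\lambda_-)$ by analyticity of $\phi$ on the strip), Condition~I rewrites as
\begin{equation*}
M^{(n)}(\lambda_- - s)\sim s^{-\rho_-}\,l_-(1/s),\qquad M^{(n)}(-\lambda_+ + s)\sim s^{-\rho_+}\,l_+(1/s),\qquad s\to 0^+,
\end{equation*}
i.e.\ a regularly varying blow-up of $M^{(n)}$ at each boundary of its strip of convergence. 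Note that analyticity on the full strip implies both (IR) and (IL).

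For i)$\Rightarrow$ii), I focus on the right tail (the left being symmetric). The bound $\liminf_{\kappa\to+\infty}\left(-\ln\bar{F}(\kappa)/\kappa\right)\geq\lambda_-$ is a Chernoff--Markov estimate: for every $u<\lambda_-$ one has $M(u)<\infty$ and $\bar{F}(\kappa)\leq e^{-u\kappa}M(u)$, so $-\ln\bar{F}(\kappa)/\kappa\geq u-o(1)$, and letting $u\uparrow\lambda_-$ yields the claim. The matching $\limsup\leq\lambda_-$ is obtained by contradiction using Condition~I: suppose $\bar{F}(\kappa)\leq Ce^{-\alpha\kappa}$ for some $\alpha>\lambda_-$ and large $\kappa$. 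Integration by parts gives
\begin{equation*}
\int_0^{+\infty}\zeta^n e^{u\zeta}\,\mathrm{d}F(\zeta)= \text{(bounded)}+\int_0^{+\infty}\bar{F}(\zeta)\,\zeta^{n-1}e^{u\zeta}(n+u\zeta)\,\mathrm{d}\zeta,
\end{equation*}
and the last integral is finite for every $u<\alpha$, in particular at $u=\lambda_-$. Meanwhile $\left|\int_{-\infty}^0\zeta^n e^{u\zeta}\,\mathrm{d}F(\zeta)\right|\leq\mathbb{E}\left[|\zeta|^n\mathbf{1}_{\{\zeta<0\}}\right]$ is bounded uniformly on $u\in[0,\lambda_-]$ thanks to (IL). Hence $M^{(n)}(\lambda_-)<+\infty$. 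Because $\zeta^n e^{u\zeta}\geq 0$ on $[0,+\infty)$ and increases in $u$ there, monotone convergence on the positive half-line and dominated convergence on the negative one give $M^{(n)}(\lambda_- - s)\to M^{(n)}(\lambda_-)<+\infty$ as $s\to 0^+$, contradicting the divergence prescribed by Condition~I. Therefore $\ln\bar{F}(\kappa)\sim-\lambda_-\kappa$, and the analogous argument at the left boundary yields $\ln F(-\kappa)\sim-\lambda_+\kappa$.

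For ii)$\Rightarrow$iii), the asymptotic $-\ln\bar{F}(\kappa)\sim\lambda_-\kappa$ shows that $-\ln\bar{F}$ is asymptotically equivalent to the linear function $\lambda_-\kappa\in\mathcal{R}_1$, so by the asymptotic-equivalence property recalled just after the definition of regular variation, $-\ln\bar{F}\in\mathcal{R}_1$. With $\theta=1\geq 1$ the hypothesis of Theorem~\ref{prop:right_wing} is met, and the right wing formula gives
\begin{equation*}
\frac{\mathrm{I}(\kappa)^2}{\kappa}\sim-\frac{\kappa}{2\ln\bar{F}(\kappa)}\sim\frac{1}{2\lambda_-}\qquad\text{as }\kappa\to+\infty.
\end{equation*}
Applying Theorem~\ref{prop:left_wing} symmetrically to the left tail yields $\mathrm{I}(-\kappa)^2/\kappa\to 1/(2\lambda_+)$, i.e.\ $\mathrm{I}(\kappa)^2/|\kappa|\to 1/(2\lambda_+)$ as $\kappa\to-\infty$, which is iii). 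The only delicate step is the contradiction argument in i)$\Rightarrow$ii): one must split $M^{(n)}$ over $\{\zeta\geq 0\}$ and $\{\zeta<0\}$ so that monotone convergence applies on the former while (IL) controls the latter, and exploit that the blow-up of $M^{(n)}$ at the boundary (not merely the loss of analytic continuation there) is exactly what forbids a faster-than-$e^{-\lambda_-\kappa}$ decay.
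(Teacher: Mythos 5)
Your ii) $\Rightarrow$ iii) step is correct and is essentially the paper's own argument: analyticity on the strip gives (IR) and (IL), the linear asymptotics place $-\ln\bar F(\kappa)$ and $-\ln F(-\kappa)$ in $\mathcal{R}_1$, and Theorems \ref{prop:right_wing} and \ref{prop:left_wing} finish. The problem is in i) $\Rightarrow$ ii), which the paper does not prove from scratch but delegates to Criterion 1 of \cite{benaim2008smile}. Your Chernoff estimate correctly gives $\liminf_{\kappa\to+\infty}\left(-\ln\bar F(\kappa)/\kappa\right)\geq\lambda_-$. The converse direction has a genuine gap, for two related reasons. First, the negation of ``$\limsup\left(-\ln\bar F(\kappa)/\kappa\right)\leq\lambda_-$'' is \emph{not} ``$\bar F(\kappa)\leq Ce^{-\alpha\kappa}$ for all large $\kappa$ and some $\alpha>\lambda_-$''; it only yields such a bound along a subsequence, which does not feed into your integration-by-parts estimate. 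Second, even granting that you have shown the uniform bound $\bar F(\kappa)\leq Ce^{-\alpha\kappa}$ to be impossible for every $\alpha>\lambda_-$ (that part of your argument is fine), this only proves $\liminf\left(-\ln\bar F(\kappa)/\kappa\right)\leq\lambda_-$, hence that the liminf equals $\lambda_-$; it says nothing about the limsup, whereas ii) asserts an actual limit.

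The missing ingredient is precisely the regular-variation structure in Condition I, which your proof never invokes: you use only that $M^{(n)}$ diverges at the boundary, and mere divergence is insufficient. For instance, let $\bar F(\zeta)=e^{-\lambda_-\zeta}$ on the blocks $[4^k,2\cdot 4^k]$ and let it be constant, equal to $e^{-\lambda_-4^{k+1}}$, on $(2\cdot 4^k,4^{k+1}]$: then $M(u)<+\infty$ for $u<\lambda_-$ and $M(\lambda_--s)\to+\infty$ as $s\to 0^+$, yet $-\ln\bar F(\kappa)/\kappa$ oscillates between $\lambda_-$ and $2\lambda_-$, so $\ln\bar F(\kappa)\not\sim-\lambda_-\kappa$. (This example escapes the theorem only because the blow-up of $M$ at $\lambda_-$ is log-periodically modulated and hence not of the form $s^{-\rho_-}l_-(1/s)$ with $l_-$ slowly varying.) The standard route behind the cited criterion is Tauberian: set $U(x):=\int_0^x\zeta^n e^{\lambda_-\zeta}\,\mathrm{d}F(\zeta)$, note that its Laplace transform at $s$ is the positive-half contribution to $M^{(n)}(\lambda_--s)$, apply Karamata's Tauberian theorem to get $U(x)\sim x^{\rho_-}l_-(x)/\Gamma(1+\rho_-)$, and then use $U((1+\varepsilon)x)-U(x)\leq\left((1+\varepsilon)x\right)^n e^{(1+\varepsilon)\lambda_- x}\,\bar F(x)$ to conclude $\limsup\left(-\ln\bar F(\kappa)/\kappa\right)\leq(1+\varepsilon)\lambda_-$ for every $\varepsilon>0$. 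You should either supply this Tauberian step or, as the paper does, cite the criterion directly.
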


\begin{proof}
We prove that i) $\Rightarrow$ ii). Let us consider the moment generating function of the random variable $\zeta$:
\begin{equation*}
M(s) := \mathbb{E} \left[ e^{s\zeta} \right] = \phi(-i s), \quad s \in (-\lrn,\lln)\,\,.
\end{equation*}
Condition I is equivalent to
\[
M^{(n)}\left(\pm \left(\lambda_\pm - s\right)\right) \sim s^{-\rho_\pm} \, l_\pm\left(\frac{1}{s}\right)\quad\text{as $s \to 0^+$}\;\;,
\]
for some $\rho_\pm > 0$ and $l_\pm \in \mathcal{R}_0$, where $M^{(n)}$ denotes the $n$-th derivative of the moment generating function. Indeed
\[
M^{(n)}\left(\pm \left(\lambda_\pm - s\right)\right) = (-i)^n\phi^{(n)}\left(\mp i\left(\lambda_\pm - s\right)\right)\;\;.
\]

\smallskip

Thus, we can apply Criterion $1$ from \cite{benaim2008smile}, and conclude that
\[
\begin{cases}
\ln \bar{F}(\kappa) \sim -\lambda_+ \kappa, \\
\ln F(-\kappa) \sim -\lambda_- \kappa
\end{cases}\quad \text{as}\;\;\kappa\to+\infty\;\;.
\]

\smallskip

Finally, we prove that ii) $\Rightarrow$ iii).
Thanks to ii), $-\ln\bar{F}(\kappa)$ and $-\ln F(-\kappa)\in\mathcal{R}_1$, and we can apply Theorem \ref{prop:right_wing} and \ref{prop:left_wing}
\begin{align*}
    \lim_{\kappa\to+\infty}\frac{\mathrm{I}(\kappa)^2}{\kappa}&=-\lim_{\kappa\to+\infty}\frac{\kappa}{2\,\ln\bar{F}(\kappa)}=\frac{1}{2\,\lambda_+}\,\,,\\
    \lim_{\kappa\to-\infty}\frac{\mathrm{I}(\kappa)^2}{\kappa}&=-\lim_{\kappa\to-\infty}\frac{\kappa}{2\,\ln{F}(\kappa)}=-\frac{1}{2\,\lambda_-}
\end{align*}
\end{proof}

Condition II is fairly general and is satisfied by most models commonly used to describe returns in quantitative finance, in which the characteristic function is analytic on the whole horizontal strip $\Im(\xi)\in(-\lambda_+,\lambda_-)$. This includes stochastic volatility models \citep[e.g.,][]{heston1993closed}, L\'evy models such as Meixner (see, e.g., \citeauthor{grigelionis1999processes}, \citeyear{grigelionis1999processes}), Tempered Stable, Normal Tempered Stable and Generalized Hyperbolic models (see, e.g., \citeauthor{Cont}, \citeyear{Cont}, Ch.4), and additive models \citep[e.g.,][]{carr2007self,ATS}. The statemente for Heston model and NTS L\'evy models has been proven by \cite{benaim2008smile}, while the other cases are discussed in Section \ref{sec:examples}. \\ 
For models that generalise the Bachelier framework,\footnote{Theorem \ref{prop:analyticity_smile_2} is even more general. Its first implication can be applied to models --commonly used in quantitative finance-- defined on log-returns, such as the Heston model or exponential L\'evy models \citep[cf.,][Ch.8]{Cont}.} 
Theorem \ref{prop:analyticity_smile_2} implies that, when at least one derivative of the characteristic function exhibits regularly varying divergence at $\mp i\lambda_{\pm}$, the implied variance grows linearly for large absolute values of moneyness degree $\kappa$, with a slope inversely proportional to the boundaries of the analyticity strip of the characteristic function, as observed in all derivatives markets.

\bigskip

We now prove that the converse also holds. Suppose that the implied variance grows linearly for large absolute values of the moneyness degree $\kappa$. Then the probability distribution has exponentially decaying tails and the characteristic function of returns admits an analyticity strip (Condition I). This is a more general condition than Condition II, and the boundaries of the strip are inversely proportional to the slope of the implied variance.
\begin{theorem}\label{thm:analyticity_smile_converse}
Consider $\lambda_\pm$ two positive and finite constants. Then, the following implications hold:
\begin{equation*}
\text{i)}\quad\lim_{\kappa \to \pm \infty} \frac{\mathrm{I}(\kappa)^2}{|\kappa|} = \frac{1}{2 \lambda_\pm}
\;\Rightarrow\;
\text{ii)}\quad
\begin{cases}
\ln \bar{F}(\kappa) \sim -\lambda_+ \kappa, \\
\ln F(-\kappa) \sim -\lambda_- \kappa
\end{cases}
\;\; \text{as } \kappa \to +\infty
\;\Rightarrow\;
\text{iii)}\;\;\begin{aligned}
    &\text{Condition I}.
\end{aligned}
\end{equation*}
\end{theorem}
\begin{proof}
    First, we prove that i)$\,\Rightarrow\,$ii). From the right side of \eqref{eq:asymptotic_lnc_I2}, that holds because in this case assumption b) of Lemma \ref{lem:tec_lem} is true,
    \begin{equation}\label{eq:log_c_asym}
        -\ln c(\kappa)\sim\frac{\kappa^2}{2\,\mathrm{I}(\kappa)^2}\sim\lln\kappa\quad\text{as }\kappa\to+\infty\,\,,
    \end{equation}
    thus,
    \begin{equation*}
        -\ln c(\kappa)\in\mathcal{R}_1\,\,.
    \end{equation*}
    Let us observe, from Lemma \ref{lem:bcl_price_barF}, that
    \begin{equation*}
        c'(\kappa)=-\Bar{F}(\kappa)\,\,,
    \end{equation*}
    and that
    \begin{equation*}
        (-\ln c(\kappa))'=\frac{\Bar{F}(k)}{c(k)}\,\,\quad\Rightarrow\quad-\ln(\Bar{F}(\kappa))=-\ln\left((-\ln c(\kappa))'\right)-\ln(c(\kappa))\,\,.
    \end{equation*}
    Since $-\ln c(\kappa)\sim\lln\kappa$ as $\kappa\to+\infty$ for \eqref{eq:log_c_asym} and $(-\ln c(\kappa))'$ is a positive function of $\kappa$, we can conclude that
    \begin{equation*}
        -\ln\left((-\ln c(\kappa))'\right)=o(\kappa)\quad\text{as }\kappa\to+\infty\quad\Rightarrow\quad-\ln\Bar{F}(\kappa)\sim\lln\kappa\quad\text{as }\kappa\to+\infty\,\,,
    \end{equation*}
    indeed, applying De l'H\^opital's rule (whose hypotheses are satisfied thanks to the monotonicity of $\Bar{F}(\kappa)$ and $c(\kappa)$,
    \[
    \lln=\lim_{\kappa\to+\infty}\frac{-\ln c(\kappa)}{\kappa}=\lim_{\kappa\to+\infty}\frac{(-\ln c(\kappa))'}{1}=\lim_{\kappa\to+\infty}\frac{\Bar{F}(\kappa)}{c(k)}\quad\Rightarrow\quad\lim_{\kappa\to+\infty}\ln(-\ln\Bar{F}(\kappa))')=\ln(\lln)\,\,.
    \]
    Analogously, we can prove that
    \begin{equation*}
        -\ln F(-\kappa)\sim\lrn\kappa\quad\text{as }\kappa\to+\infty\,\,.
    \end{equation*}
    Let us prove that ii)$\,\Rightarrow\,$iii). From condition ii), we have that 
    \[
    \lim_{\kappa\to+\infty} -\frac{\ln \bar{F}(\kappa)}{\kappa} = \lambda_+\,\,.
    \]
    By definition of the limit, for any arbitrary $a$ satisfying $0 < a < \lambda_+$, there exists $\kappa_a$ such that for all $\kappa > \kappa_a$, the inequality $\bar{F}(\kappa) \le e^{-a \kappa}$ holds. Indeed, choosing $\epsilon > 0$ such that $a = \lambda_+ - \epsilon$, for sufficiently large $\kappa$ we have
    \[
    -\ln \bar{F}(\kappa) \ge (\lambda_+ - \epsilon)\,\kappa\,\,,
    \]
    which implies 
    \[
    \bar{F}(\kappa) \le e^{-(\lambda_+ - \epsilon)\kappa} = e^{-a \kappa}\,\,.
    \]
    Now consider the characteristic function $\phi(\xi)$ for a complex argument $\xi = u + i v$. The existence of the expectation depends on the convergence of the integral involving the real part of the exponent, i.e., 
    \begin{equation}
    \label{eq:cf_integral}
        -\int_{0}^{+\infty} e^{-v \zeta} \mathrm{d}\Bar{F}(\zeta)\,\,.
    \end{equation}
    Focusing on the positive semi-axis, the integral \eqref{eq:cf_integral} converges absolutely if $\bar{F}(\kappa)$ decays faster than $e^{v\kappa}$. Using integration by parts and the bound derived above, we obtain:
    \[
    -\int_0^{+\infty} e^{-v \zeta} \mathrm{d}\Bar{F}(\zeta) = \left[e^{-v\zeta}\bar{F}(\zeta)\right]_0^{+\infty} + v \int_0^{+\infty} e^{-v\zeta}\bar{F}(\zeta)\mathrm{d}\zeta < +\infty
    \]
    provided that $v > -\lambda_+$, since we can always choose $a$ such that $-v < a < \lambda_+$.
    
    \smallskip
    
    Analogous reasoning applies to the left tail $F(-\kappa)$, ensuring convergence for $v < \lambda_-$. Consequently, $\phi(\xi)$ is analytic in the horizontal strip $\Im(\xi) \in (-\lambda_+, \lambda_-)$.
\end{proof}

The result above is crucial from both a theoretical and a practical perspective. Theorem \ref{thm:analyticity_smile_converse} shows that, if the Bachelier implied volatility $\mathrm{I}(\kappa)$ is proportional to the square root of the moneyness degree for large absolute values of $\kappa$, then the model for the underlying returns must belong to the class of models whose characteristic function is analytical in a horizontal strip of the complex plane. This has direct implications for model selection: one may first test whether the implied volatility, which is directly observable in the market, exhibits the behaviour described in point i) of Theorem \ref{thm:analyticity_smile_converse}, and then select the model accordingly.

\smallskip

In the remaining part of this Section, we show some examples of models for which Condition II 
holds.

\subsection{Examples}\label{sec:examples}
In this Section, we show some of the L\'evy and additive models already cited above that satisfy Condition II. In particular, we consider: the Meixner process, the Tempered Stable processes and the Generalized Hyperbolic processes.
\subsubsection{Meixner}
The characteristic function of a Meixner process $\{F_t\}_{t\geq0}$ \citep[with initial condition $F_0=0$, see, e.g.,][Def.1, p.33]{grigelionis1999processes} is 
\begin{equation*}
    \phi_F(\xi):=\left(\frac{\cos\frac{\beta}{2}}{\cosh\frac{\alpha\xi-i\beta}{2}}\right)^{2\delta t}e^{i\xi\mu t}\,\,,
\end{equation*}
where $\alpha>0$, $-\pi<\beta<\pi$, $\mu\in\mathbb{R}$ and $\delta>0$. The characteristic function of returns $\zeta$ is
\begin{equation*}
    \phi(\xi)=\phi_F\left(\frac{\xi}{\sqrt{t}}\right)=\left(\frac{\cos\frac{\beta}{2}}{\cosh\frac{\alpha\xi-i\beta\sqrt{t}}{2\sqrt{t}}}\right)^{2\delta t}e^{i\xi\mu\sqrt{t}}\,\,.
\end{equation*}
The characteristic function of $\zeta$ is analytic in the horizontal strip $\Im(\xi)\in(-\lln,\lrn)$ with
\[
\lambda_\pm=\frac{\pi\mp\beta}{\alpha}\sqrt{t}\,\,.
\]
With straightforward computations, one can prove that
\begin{equation*}
    (-i)\phi'(\mp i(\lambda_\pm-s))\sim\sqrt{t}\,\left(\frac{\alpha\,s}{2\sqrt{t}}\right)^{-1-2\delta t}\cos\frac{\beta}{2}\,e^{\pm\lambda_\pm\mu\sqrt{t}}\quad\text{as }s\to0^+\,\,,
\end{equation*}
thus, Condition II is satisfied with
\begin{align*}
    &\rho_\pm=1+2\delta t>0\\
    &l_\pm\left(\frac{1}{s}\right)=\sqrt{t}\,\left(\frac{\alpha}{2\sqrt{t}}\right)^{-1-2\delta t}\cos\frac{\beta}{2}\,e^{\pm\lambda_\pm\mu\sqrt{t}}\in\mathcal{R}_0\,\,,
\end{align*}
and $n=1$. Furthermore, this result naturally extends to the class of additive processes. Indeed, at any fixed maturity $t$, the time-inhomogeneity of the process does not alter the singularity structure of the characteristic function, ensuring that Condition II remains satisfied.
\subsubsection{Tempered Stable}
The characteristic function of a Tempered Stable process $\{F_t\}_{t \ge 0}$ \citep[with $F_0=0$, see, e.g.,][Prop. 4.2]{Cont} is
\begin{equation*}
    \phi_F(\xi) := \exp\left( t\,C\, \Gamma(-\alpha) \left[ (a_+ - i\xi)^\alpha - a_+^\alpha + (a_- + i\xi)^\alpha - a_-^\alpha \right] + i\xi\mu t \right)\,\,,
\end{equation*}
where $\alpha \in (0, 2) \setminus \{1\}$, $C > 0$, and $a_+, a_- > 0$. The characteristic function of returns $\zeta$ is
\begin{equation*}
    \phi(\xi) = \phi_F\left(\frac{\xi}{\sqrt{t}}\right) = \exp\left( t\,C\, \Gamma(-\alpha) \left[ \left(a_+ - i\frac{\xi}{\sqrt{t}}\right)^\alpha - a_+^\alpha + \left(a_- + i\frac{\xi}{\sqrt{t}}\right)^\alpha - a_-^\alpha \right] + i\xi\mu \sqrt{t} \right)\,\,.
\end{equation*}
The characteristic function of $\zeta$ is analytic in the horizontal strip $\Im(\xi) \in (-\lambda_+, \lambda_-)$ with
\[
\lambda_{\pm} = a_{\pm}\sqrt{t} \;.
\]
With straightforward computations, considering the case $0 < \alpha < 1$, one can prove that
\begin{equation*}
    (-i)\phi'(\mp i(\lambda_{\pm} - s)) \sim s^{\alpha-1} \sqrt{t}\,C\,\Gamma(1-\alpha)\,t^{-\alpha/2} \phi(\mp i\lambda_{\pm}) \quad \text{as } s \to 0^+ \;,
\end{equation*}
thus, Condition II is satisfied with
\begin{align*}
    &\rho_{\pm} = 1 - \alpha > 0\\
    &l_{\pm}\left(\frac{1}{s}\right) =\displaystyle \sqrt{t}\,C\,\Gamma(1-\alpha)\,t^{-\alpha/2}\,e^{t C \Gamma(-\alpha)[(a_{\mp} + a_{\pm})^\alpha - a_{\mp}^\alpha - a_{\pm}^\alpha] \pm \lambda_{\pm}\mu\sqrt{t}} \in \mathcal{R}_0 \;,
\end{align*}
and $n=1$. In the case $1 < \alpha < 2$, the first derivative converges at the boundaries. However, the second derivative diverges. Straightforward computations show that:
\begin{equation*}
    (-i)^2 \phi''(\mp i(\lambda_{\pm} - s)) \sim s^{\alpha-2} \sqrt{t}\,C\, \Gamma(2-\alpha)\,t^{-(1+\alpha)/2} \,\phi(\mp i\lambda_{\pm}) \quad \text{as } s \to 0^+ \;,
\end{equation*}
thus, Condition II is satisfied with $n=2$ and:
\begin{align*}
    &\rho_{\pm} = 2 - \alpha > 0\\
    &l_{\pm}\left(\frac{1}{s}\right) = \sqrt{t}\,C\,\Gamma(2-\alpha)\,t^{-(1+\alpha)/2}\,\phi(\mp i\lambda_{\pm}) \in \mathcal{R}_0 \;.
\end{align*}
As observed in the Meixner case, this Condition applies to additive processes as well.
\subsubsection{Generalized Hyperbolic models}
The characteristic function of a Generalized Hyperbolic process $\{F_t\}_{t \ge 0}$ \citep[see, e.g.,][eq.(4.39), Sec.4.6]{Cont} is given by
\begin{equation*}
\phi_F(\xi) := e^{i\xi\mu t} \left( \frac{\alpha^2 - \beta^2}{\alpha^2 - (\beta + i\xi)^2} \right)^{\frac{\lambda}{2}} \frac{K_{\lambda}(\delta \sqrt{\alpha^2 - (\beta + i\xi)^2})}{K_{\lambda}(\delta \sqrt{\alpha^2 - \beta^2})}\,\,,
\end{equation*}
where $K_{\lambda}(\bullet)$ denotes the modified Bessel function of the second kind \citep[see, e.g.,][Ch.9, p.374]{abramowitz1948handbook}, and the parameters satisfy $\alpha > |\beta|$, $\delta > 0$ and $\lambda \in \mathbb{R}$.\\
The characteristic function of returns $\zeta$ is $\phi(\xi) = \phi_F(\xi/\sqrt{t})$ and is analytic in the horizontal strip $\Im(\xi) \in (-\lambda_+, \lambda_-)$ with boundaries $\lambda_{\pm} = (\alpha \mp \beta)\sqrt{t}$. The behaviour near the boundary depends on $\lambda$.

\smallskip

The limiting case $\lambda \to 0$, corresponding to the Variance Gamma process, has been discussed in \cite{benaim2008smile}.

\smallskip

For $\lambda > 0$, the characteristic function diverges at the boundaries; specifically, using the asymptotic relation $K_{\lambda}(z) \sim \frac{1}{2}\Gamma(\lambda)(z/2)^{-\lambda}$ as $z \to 0$ \citep[cf.][eq.(9.6.9), Ch.9, p.375]{abramowitz1948handbook} and noting that the argument of the Bessel function scales as $\sqrt{s}$ near the boundary (where $s$ is the distance from the singularity), the function behaves as $\phi(-i(\lambda_+ - s)) \sim s^{-\lambda}$. Thus, Condition II is satisfied with $n=0$ and $\rho_{\pm} = \lambda$.

\smallskip

Conversely, for $\lambda < 0$, we define $\nu = -\lambda > 0$. The relevant term in the characteristic function takes the form $z^{\nu} K_{\nu}(z)$ with $z \sim \sqrt{s}$. Utilizing the series expansion for the Bessel function of small argument \citep[cf.][eq.(9.6.11), Ch.9, p.375]{abramowitz1948handbook}, the leading term is constant (implying that the function is bounded), while the dominant singular term scales as $z^{2\nu} \sim s^{\nu} = s^{-\lambda}$. Consequently, the function itself converges, but its derivatives diverge. To satisfy Condition II, one must choose the differentiation order $n$ such that $n > -\lambda$. The $n$-th derivative scales as $s^{-\lambda - n}$, satisfying the condition with a tail decay index $\rho_{\pm} = n + \lambda$. For instance, in the Normal Inverse Gaussian case ($\lambda = -1/2$), we have $n=1$ and $\rho_{\pm} = 1/2$.\\
Similarly, for additive processes of this class, Condition II holds because the time inhomogeneity does not affect the characteristic function's singularities in the complex plane.
\section{Conclusions}
\label{sec:Conclusions}
In this paper, we have extended the asymptotic analysis of the implied volatility smile from the Black–Scholes setting, as developed by \cite{benaim2009regular}, to the Bachelier model, which is particularly relevant in markets where the underlying asset can take negative values.

\smallskip

Using the theory of regular variation, we have derived explicit asymptotic expressions for the Bachelier implied volatility in the wings of the volatility smile, i.e., for large positive or negative moneyness degree. We have proven that there exists a direct link between the tail behaviour of the distribution of the underlying's returns and the corresponding implied volatility behaviour.

\smallskip

Moreover, in cases where the distribution of returns is not available in closed form but the characteristic function is known, we have shown that our formula can still characterise the asymptotic behaviour of the implied volatility. In particular, we have established that the boundaries of the analyticity strip of the returns' characteristic function directly determine the behaviour of Bachelier implied volatility in the smile's wings for most common models in quantitative finance.

\smallskip

Finally, we have shown that if the implied variance grows linearly for large absolute values of the moneyness degree, then the tails of the returns’ probability distribution necessarily exhibit exponential decay and the corresponding characteristic function is analytical in a horizontal strip of the complex plane. Moreover, both the decay rates and the boundaries of the strip are determined by the slope of the implied variance at large moneyness degree.

\section*{Acknowledgements}
We thank all participants in the Advances in Mathematical Finance Conference in Freiburg for comments and discussions.

\bibliography{main}

@article{benaim2008smile,
  title={{Smile asymptotics II: models with known moment generating functions}},
  author={Benaim, S. and Friz, P.},
  journal={Journal of Applied Probability},
  volume={45},
  number={1},
  pages={16--32},
  year={2008},
  publisher={Cambridge University Press},
  MRNUMBER={2409307}
}

@article{grigelionis1999processes,
  title={{Processes of Meixner type}},
  author={Grigelionis, B.},
  journal={Lithuanian Mathematical Journal},
  volume={39},
  number={1},
  pages={33--41},
  year={1999},
  publisher={Springer},
  MRNUMBER = {1711971}
}

@article{heston1993closed,
  title={A closed-form solution for options with stochastic volatility with applications to bond and currency options},
  author={Heston, S. L.},
  journal={The review of financial studies},
  volume={6},
  number={2},
  pages={327--343},
  year={1993},
  publisher={Oxford University Press},
  MRNUMBER = {3929676}
}

@book{mitrinovic1970analytic,
  title={{Analytic inequalities}},
  author={Mitrinovic, D. S.},
  year={1970},
  publisher={Springer-Verlag},
  MRNUMBER = {279261}
}

@book{abramowitz1948handbook,
  title={Handbook of mathematical functions with formulas, graphs, and mathematical tables},
  author={Abramowitz, Milton and Stegun, Irene A},
  volume={55},
  year={1948},
  publisher={US Government printing office}
}

@article{ATS,
  title={Additive normal tempered stable processes for equity derivatives and power law scaling},
  author={Azzone, M. and Baviera, R.},
journal = {Quantitative Finance},
volume = {22},
number = {3},
pages = {501-518},
year  = {2022},
publisher = {Routledge},
doi = {10.1080/14697688.2021.1983200},
MRNUMBER={4401813}
}

@article{bachelier1900jeu,
  title={{Jeu de speculation (The theory of speculation)}},
  author={Bachelier, L.},
  journal={Ph.D. dissertation},
  year={1900},
  publisher={University of Paris},
  mrnumber={1397712}
}

@article{benaim2009regular,
  title={Regular variation and smile asymptotics},
  author={S. Benaim and P. Friz},
  journal={Mathematical Finance: an International Journal of Mathematics, Statistics and Financial Economics},
  volume={19},
  number={1},
  pages={1--12},
  year={2009},
  publisher={Wiley Online Library},
  MRNUMBER = {2488399}
}

@article{bingham1987regular,
  title={{Regular Variation}},
  author={Bingham, N. H. and Goldie, C. M. and Teugels, J. L.},
  journal={Cambridge University Press},
  year={1987},
  MRNUMBER = {898871}
}

@article{carr2007self,
  title={Self-decomposability and option pricing},
  author={Carr, P. and Geman, H. and Madan, D. B. and Yor, M.},
  journal={Mathematical finance},
  volume={17},
  number={1},
  pages={31--57},
  year={2007},
  publisher={Wiley Online Library},
  MRNUMBER = {2281791}
}

@book{Cont,
    author    = {Cont, R. and Tankov, P.},
    title     = {Financial Modelling with jump processes},
    publisher = {Chapman and Hall/CRC.},
    year      = {2003},
  MRNUMBER = {2042661}
}

@article{CMEGroup,
    author ={CME},
    title     ={Switch to {B}achelier options pricing model -- {E}ffective {A}pril 22, 2020},
    journal={Available at CME, www.cmegroup.com/content/cmegroup/en/notices/clearing/2020/04/Chadv20-171.html},
    publisher ={CME},
    year      ={2020},
    number={20-171}
}

@article{ICE,
    author ={ICE},
    title     ={Negative pricing and option model transition},
    journal={Available at ICE, www.theice.com/publicdocs/circulars/20057.pdf},
    year      ={2020},
    number={20-057}
}

@book{davis2006,
  author    = {Davis, M. and Etheridge, A.},
  title     = {{Louis Bachelier’s Theory of Speculation: the origins of modern finance}},
  year      = {2006},
  publisher = {Princeton University Press},
}

@book{gatheral2011volatility,
  title={The volatility surface: a practitioner's guide},
  author={Gatheral, J.},
  volume={357},
  year={2011},
  publisher={John Wiley \& Sons}
}

@article{lee2004moment,
  title={The moment formula for implied volatility at extreme strikes},
  author={Lee, R. W.},
  journal={Mathematical Finance},
  volume={14},
  number={3},
  pages={469--480},
  year={2004},
  publisher={Wiley Online Library},
  MRNUMBER = {2070174}
}

@article{Lukacs,
    author =       {E. Lukacs },
    title =        {{A Survey of the Theory of Characteristic Functions.}},
    journal =      {Advances in Applied Probability},
    volume =       {4},
  number =       {1},
    pages =        {1--38},
    year =         {1972},
    MRNUMBER={0314092}
}

@book{rudin2006real,
  title={Real and complex analysis},
  author={Rudin, Walter},
  year={1986},
  publisher={Tata McGraw-{H}ill education}
}

@article{choi2022black,
  title={{A Black--Scholes user's guide to the Bachelier model}},
  author={Choi, J. and Kwak, M. and Tee, C. W. and Wang, Y.},
  journal={Journal of Futures Markets},
  volume={42},
  number={5},
  pages={959--980},
  year={2022},
  publisher={Wiley Online Library}
}
\bibliographystyle{tandfx}
\section*{Notation and shorthands}
\begin{flushleft}

	\begin{tabular} {|c|l|}
		\toprule
		\textbf{Symbol}& \textbf{Description}\\ \bottomrule
		$c_b(y,\sigma), p_b(y,\sigma)$ & normalized Bachelier call (put) option price wrt moneyness $y$, with volatility $\sigma$ \\
        $\kappa$ & option moneyness degree\\
        $t$ & time-to-maturity\\
		$c(\kappa), p(\kappa)$ & undiscounted call and put prices wrt moneyness degree $\kappa$\\
		$\mathrm{I}(\kappa)$ & implied volatility wrt the moneyness degree $\kappa$\\
        		$\Re(\xi), \Im(\xi)$ & real and imaginary parts of a complex number $\xi$\\
		${\varphi} ( \bullet ), {\Phi} ( \bullet )$ & pdf and cdf of a standard normal rv\\
		$\zeta$ & increment of forward price divided by the square root of the time-to-maturity $t$\\
		$F(\bullet), \bar{F}(\bullet)$ & cdf of the rv $\zeta$ and $\bar{F}(\bullet)=1-F(\bullet)$\\
		$\phi(\bullet)$ & characteristic function of the rv $\zeta$\\
		$\phi^{(n)}(\bullet)$ & $n$-th derivative of the characteristic function $\phi(\bullet)$\\
        $\lambda_\pm$ & boundaries of the analyticity strip of $\phi(\bullet)$ and decay rates of distribution tails \\
		$M(s)$ & moment generating function of the rv $\zeta$\\
        $\theta$ & regular varying functions index \\
		$\mathcal{R}_\theta$ & class of regular varying functions with index $\theta>0$\\
		\bottomrule		
	\end{tabular}
	
\end{flushleft}	

\smallskip 

\begin{flushleft}
	\begin{tabular}{|c|l|}
		\toprule
		\textbf{Symbol}& \textbf{Description}\\ \bottomrule
		cdf & cumulative distribution function \\
		pdf & probability density function \\
		rv & random variable \\
            	wrt  & with respect to \\
		\bottomrule
	\end{tabular}
\end{flushleft}

\end{document}